\newcolumntype{L}[1]{>{\raggedright\let\newline\\\arraybackslash\hspace{0pt}}m{#1}}
\newcolumntype{C}[1]{>{\centering\let\newline\\\arraybackslash\hspace{0pt}}m{#1}}
\newcolumntype{R}[1]{>{\raggedleft\let\newline\\\arraybackslash\hspace{0pt}}m{#1}}
\let\MYcaption\@makecaption
\let\@makecaption\MYcaption
\let\oldgls\gls
\let\oldglspl\glspl
\newcommand\fussy@ifnextchar[3]{%
	\let\reserved@d=#1%
	\def\reserved@a{#2}%
	\def\reserved@b{#3}%
	\futurelet\@let@token\fussy@ifnch}
\def\fussy@ifnch{%
	\ifx\@let@token\reserved@d
		\let\reserved@c\reserved@a
	\else
		\let\reserved@c\reserved@b
	\fi
	\reserved@c}
\renewcommand{\gls}[1]{%
\oldgls{#1}\fussy@ifnextchar.{\@checkperiod}{\@}}
\renewcommand{\glspl}[1]{%
\oldglspl{#1}\fussy@ifnextchar.{\@checkperiod}{\@}}
\newcommand{\@checkperiod}[1]{%
	\ifnum\sfcode`\.=\spacefactor\else#1\fi
}
\newacronym{wrt}{w.r.t.}{with respect to}
\newacronym{RHS}{R.H.S.}{right-hand side}
\newacronym{LHS}{L.H.S.}{left-hand side}
\newacronym{iid}{i.i.d.}{independent and identically distributed}
\newacronym{SOTA}{SOTA}{state-of-the-art}
\let\saved@bibitem\@bibitem\makeatother
\let\@bibitem\saved@bibitem\makeatother
\crefname{equation}{}{}
\Crefname{equation}{}{}
\crefname{claim}{claim}{claims}
\crefname{step}{step}{steps}
\crefname{line}{line}{lines}
\crefname{condition}{condition}{conditions}
\crefname{dmath}{}{}
\crefname{dseries}{}{}
\crefname{dgroup}{}{}
\crefname{page}{page}{pages}
\crefname{Problem}{Problem}{Problems}
\crefname{Theorem}{Theorem}{Theorems}
\crefname{Corollary}{Corollary}{Corollaries}
\crefname{Proposition}{Proposition}{Propositions}
\crefname{Lemma}{Lemma}{Lemmas}
\crefname{Definition}{Definition}{Definitions}
\crefname{Example}{Example}{Examples}
\crefname{Assumption}{Assumption}{Assumptions}
\crefname{Remark}{Remark}{Remarks}
\crefname{Rem}{Remark}{Remarks}
\crefname{remarks}{Remarks}{Remarks}
\crefname{Appendix}{Appendix}{Appendices}
\crefname{Supplement}{Supplement}{Supplements}
\crefname{Exercise}{Exercise}{Exercises}
\crefname{TheoremA}{Theorem}{Theorems}
\crefname{CorollaryA}{Corollary}{Corollaries}
\crefname{PropositionA}{Proposition}{Propositions}
\crefname{LemmaA}{Lemma}{Lemmas}
\crefname{DefinitionA}{Definition}{Definitions}
\crefname{ExampleA}{Example}{Examples}
\crefname{RemarkA}{Remark}{Remarks}
\crefname{AssumptionA}{Assumption}{Assumptions}
\crefname{ExerciseA}{Exercise}{Exercises}
\crefname{algorithm}{Algorithm}{Algorithms}
\crefname{figure}{Fig.}{Figs.}
\crefname{table}{Table}{Tables}
\crefname{section}{Section}{Sections}
\crefname{subsection}{Section}{Sections}
\crefname{subsubsection}{Section}{Sections}
		\let\Cref\crtCref
		\let\cref\crtcref
\def\cleartheorem#1{%
    \expandafter\let\csname#1\endcsname\relax
    \expandafter\let\csname c@#1\endcsname\relax
}
\def\clearthms#1{ \@for\tname:=#1\do{\cleartheorem\tname} }
		\newtheorem{Theorem}{Theorem}
		\newtheorem{Corollary}{Corollary}
		\newtheorem{Proposition}{Proposition}
		\newtheorem{Theorem}{Theorem}
	\newtheorem{Assumption}{Assumption}
\theoremstyle{remark}
\theoremstyle{plain}
\newcommand{\qednew}{\nobreak \ifvmode \relax \else
		\ifdim\lastskip<1.5em \hskip-\lastskip
			\hskip1.5em plus0em minus0.5em \fi \nobreak
		\vrule height0.75em width0.5em depth0.25em\fi}
\newcommand{\nn}{\nonumber\\ }
\NewDocumentCommand{\movedownsub}{e{^_}}{%
	\IfNoValueTF{#1}{%
		\IfNoValueF{#2}{^{}}
	}{%
		^{#1}
	}%
	\IfNoValueF{#2}{_{#2}}
}
\let\latexchi\chi
\RenewDocumentCommand{\chi}{}{\latexchi\movedownsub}
\newcommand{\Real}{\mathbb{R}}
\newcommand{\calA}{\mathcal{A}}
\newcommand{\calE}{\mathcal{E}}
\newcommand{\calJ}{\mathcal{J}}
\newcommand{\calK}{\mathcal{K}}
\newcommand{\calP}{\mathcal{P}}
\newcommand{\calT}{\mathcal{T}}
\newcommand{\calV}{\mathcal{V}}
\newcommand{\calX}{\mathcal{X}}
\newcommand{\calZ}{\mathcal{Z}}
\DeclareSymbolFont{ttgreek}{LGR}{cmtt}{m}{n}
\DeclareMathSymbol{\ttalpha}{\mathord}{ttgreek}{`a}
\DeclareMathSymbol{\ttbeta}{\mathord}{ttgreek}{`b}
\DeclareMathSymbol{\ttgamma}{\mathord}{ttgreek}{`g}
\DeclareMathSymbol{\ttdelta}{\mathord}{ttgreek}{`d}
\DeclareMathSymbol{\ttepsilon}{\mathord}{ttgreek}{`e}
\DeclareMathSymbol{\ttzeta}{\mathord}{ttgreek}{`z}
\DeclareMathSymbol{\tteta}{\mathord}{ttgreek}{`h}
\DeclareMathSymbol{\tttheta}{\mathord}{ttgreek}{`j}
\DeclareMathSymbol{\ttiota}{\mathord}{ttgreek}{`i}
\DeclareMathSymbol{\ttkappa}{\mathord}{ttgreek}{`k}
\DeclareMathSymbol{\ttlambda}{\mathord}{ttgreek}{`l}
\DeclareMathSymbol{\ttmu}{\mathord}{ttgreek}{`m}
\DeclareMathSymbol{\ttnu}{\mathord}{ttgreek}{`n}
\DeclareMathSymbol{\ttxi}{\mathord}{ttgreek}{`x}
\DeclareMathSymbol{\ttomicron}{\mathord}{ttgreek}{`o}
\DeclareMathSymbol{\ttpi}{\mathord}{ttgreek}{`p}
\DeclareMathSymbol{\ttrho}{\mathord}{ttgreek}{`r}
\DeclareMathSymbol{\ttsigma}{\mathord}{ttgreek}{`s}
\DeclareMathSymbol{\tttau}{\mathord}{ttgreek}{`t}
\DeclareMathSymbol{\ttupsilon}{\mathord}{ttgreek}{`u}
\DeclareMathSymbol{\ttphi}{\mathord}{ttgreek}{`f}
\DeclareMathSymbol{\ttchi}{\mathord}{ttgreek}{`q}
\DeclareMathSymbol{\ttpsi}{\mathord}{ttgreek}{`y}
\DeclareMathSymbol{\ttomega}{\mathord}{ttgreek}{`w}
\DeclareMathSymbol{\ttAlpha}{\mathord}{ttgreek}{`A}
\DeclareMathSymbol{\ttBeta}{\mathord}{ttgreek}{`B}
\DeclareMathSymbol{\ttGamma}{\mathord}{ttgreek}{`G}
\DeclareMathSymbol{\ttDelta}{\mathord}{ttgreek}{`D}
\DeclareMathSymbol{\ttEpsilon}{\mathord}{ttgreek}{`E}
\DeclareMathSymbol{\ttZeta}{\mathord}{ttgreek}{`Z}
\DeclareMathSymbol{\ttEta}{\mathord}{ttgreek}{`H}
\DeclareMathSymbol{\ttTheta}{\mathord}{ttgreek}{`J}
\DeclareMathSymbol{\ttIota}{\mathord}{ttgreek}{`I}
\DeclareMathSymbol{\ttKappa}{\mathord}{ttgreek}{`K}
\DeclareMathSymbol{\ttLambda}{\mathord}{ttgreek}{`L}
\DeclareMathSymbol{\ttMu}{\mathord}{ttgreek}{`M}
\DeclareMathSymbol{\ttNu}{\mathord}{ttgreek}{`N}
\DeclareMathSymbol{\ttXi}{\mathord}{ttgreek}{`X}
\DeclareMathSymbol{\ttPi}{\mathord}{ttgreek}{`P}
\DeclareMathSymbol{\ttRho}{\mathord}{ttgreek}{`R}
\DeclareMathSymbol{\ttSigma}{\mathord}{ttgreek}{`S}
\DeclareMathSymbol{\ttTau}{\mathord}{ttgreek}{`T}
\DeclareMathSymbol{\ttUpsilon}{\mathord}{ttgreek}{`U}
\DeclareMathSymbol{\ttPhi}{\mathord}{ttgreek}{`F}
\DeclareMathSymbol{\ttChi}{\mathord}{ttgreek}{`Q}
\DeclareMathSymbol{\ttPsi}{\mathord}{ttgreek}{`Y}
\DeclareMathSymbol{\ttOmega}{\mathord}{ttgreek}{`W}
\newcommand{\sfJ}{\mathsf{J}}
\newcommand{\sfS}{\mathsf{S}}
\newcommand{\sfd}{\mathsf{d}}
\newcommand{\sfp}{\mathsf{p}}
\newcommand{\sfr}{\mathsf{r}}
\newcommand{\sft}{\mathsf{t}}
\newcommand{\sfv}{\mathsf{v}}
\newcommand{\bsfp}{\bm{\mathsf{p}}}
\newcommand{\bttXi}{\bm{\ttXi}}
\newcommand{\bh}{\mathbf{h}}
\newcommand{\bp}{\mathbf{p}}
\newcommand{\bs}{\mathbf{s}}
\newcommand{\bS}{\mathbf{S}}
\newcommand{\bbN}{\mathbb{N}}
\DeclareSymbolFont{bsfletters}{OT1}{cmss}{bx}{n}
\DeclareSymbolFont{ssfletters}{OT1}{cmss}{m}{n}
\DeclareMathSymbol{\bsfGamma}{0}{bsfletters}{'000}
\DeclareMathSymbol{\ssfGamma}{0}{ssfletters}{'000}
\DeclareMathSymbol{\bsfDelta}{0}{bsfletters}{'001}
\DeclareMathSymbol{\ssfDelta}{0}{ssfletters}{'001}
\DeclareMathSymbol{\bsfTheta}{0}{bsfletters}{'002}
\DeclareMathSymbol{\ssfTheta}{0}{ssfletters}{'002}
\DeclareMathSymbol{\bsfLambda}{0}{bsfletters}{'003}
\DeclareMathSymbol{\ssfLambda}{0}{ssfletters}{'003}
\DeclareMathSymbol{\bsfXi}{0}{bsfletters}{'004}
\DeclareMathSymbol{\ssfXi}{0}{ssfletters}{'004}
\DeclareMathSymbol{\bsfPi}{0}{bsfletters}{'005}
\DeclareMathSymbol{\ssfPi}{0}{ssfletters}{'005}
\DeclareMathSymbol{\bsfSigma}{0}{bsfletters}{'006}
\DeclareMathSymbol{\ssfSigma}{0}{ssfletters}{'006}
\DeclareMathSymbol{\bsfUpsilon}{0}{bsfletters}{'007}
\DeclareMathSymbol{\ssfUpsilon}{0}{ssfletters}{'007}
\DeclareMathSymbol{\bsfPhi}{0}{bsfletters}{'010}
\DeclareMathSymbol{\ssfPhi}{0}{ssfletters}{'010}
\DeclareMathSymbol{\bsfPsi}{0}{bsfletters}{'011}
\DeclareMathSymbol{\ssfPsi}{0}{ssfletters}{'011}
\DeclareMathSymbol{\bsfOmega}{0}{bsfletters}{'012}
\DeclareMathSymbol{\ssfOmega}{0}{ssfletters}{'012}
\newcommand{\bphi}{\bm{\phi}}
\newcommand{\bXi}{\bm{\Xi}}
\newcommand*\rel@kern[1]{\kern#1\dimexpr\macc@kerna}
\newcommand*\widebar[1]{%
  \begingroup
  \def\mathaccent##1##2{%
    \rel@kern{0.8}%
    \overline{\rel@kern{-0.8}\macc@nucleus\rel@kern{0.2}}%
    \rel@kern{-0.2}%
  }%
  \macc@depth\@ne
  \let\math@bgroup\@empty \let\math@egroup\macc@set@skewchar
  \mathsurround\z@ \frozen@everymath{\mathgroup\macc@group\relax}%
  \macc@set@skewchar\relax
  \let\mathaccentV\macc@nested@a
  \macc@nested@a\relax111{#1}%
  \endgroup
}
\DeclareMathOperator*{\argmax}{arg\,max}
\DeclareMathOperator{\ST}{s.t.\ }
\DeclareMathOperator{\var}{var}
\DeclareMathOperator{\cov}{cov}
\newcommand{\ifbcdot}[1]{\ifblank{#1}{\cdot}{#1}}
\DeclarePairedDelimiterX\abs[1]{\lvert}{\rvert}{\ifbcdot{#1}}
\DeclarePairedDelimiterX\parens[1]{(}{)}{\ifbcdot{#1}}
\DeclarePairedDelimiterX\brk[1]{[}{]}{\ifbcdot{#1}}
\DeclarePairedDelimiterX\braces[1]{\{}{\}}{\ifbcdot{#1}}
\DeclarePairedDelimiterX\angles[1]{\langle}{\rangle}{\ifblank{#1}{\cdot,\cdot}{#1}}
\DeclarePairedDelimiterX\ip[2]{\langle}{\rangle}{\ifbcdot{#1},\ifbcdot{#2}}
\DeclarePairedDelimiterX\norm[1]{\lVert}{\rVert}{\ifbcdot{#1}}
\DeclarePairedDelimiterX\ceil[1]{\lceil}{\rceil}{\ifbcdot{#1}}
\DeclarePairedDelimiterX\floor[1]{\lfloor}{\rfloor}{\ifbcdot{#1}}
\DeclareFontFamily{U}{matha}{\hyphenchar\font45}
\DeclareFontShape{U}{matha}{m}{n}{
      <5> <6> <7> <8> <9> <10> gen * matha
      <10.95> matha10 <12> <14.4> <17.28> <20.74> <24.88> matha12
      }{}
\DeclareSymbolFont{matha}{U}{matha}{m}{n}
\DeclareFontFamily{U}{mathx}{\hyphenchar\font45}
\DeclareFontShape{U}{mathx}{m}{n}{
      <5> <6> <7> <8> <9> <10>
      <10.95> <12> <14.4> <17.28> <20.74> <24.88>
      mathx10
      }{}
\DeclareSymbolFont{mathx}{U}{mathx}{m}{n}
\DeclareMathDelimiter{\vvvert}{0}{matha}{"7E}{mathx}{"17}
\DeclarePairedDelimiterX\vertiii[1]{\vvvert}{\vvvert}{\ifbcdot{#1}}
\DeclarePairedDelimiterXPP\trace[1]{\operatorname{Tr}}{(}{)}{}{\ifbcdot{#1}} 
\DeclarePairedDelimiterXPP\col[1]{\operatorname{col}}{\{}{\}}{}{\ifbcdot{#1}} 
\DeclarePairedDelimiterXPP\row[1]{\operatorname{row}}{\{}{\}}{}{\ifbcdot{#1}} 
\DeclarePairedDelimiterXPP\erf[1]{\operatorname{erf}}{(}{)}{}{\ifbcdot{#1}}
\DeclarePairedDelimiterXPP\erfc[1]{\operatorname{erfc}}{(}{)}{}{\ifbcdot{#1}}
\DeclarePairedDelimiterXPP\KLD[2]{D}{(}{)}{}{\ifbcdot{#1}\, \delimsize\|\, \ifbcdot{#2}} 
\DeclarePairedDelimiterXPP\op[2]{\operatorname{#1}}{(}{)}{}{#2} 
\newcommand{\convp}{\stackrel{\mathrm{p}}{\longrightarrow}}
\DeclarePairedDelimiterXPP\indicate[1]{{\bf 1}}{\{}{\}}{}{\ifbcdot{#1}}
\NewDocumentCommand\ofrac{s m}{%
	\IfBooleanTF#1%
	{\dfrac{1}{#2}}%
	{\frac{1}{#2}}%
}
\NewDocumentCommand\ddfrac{s m m}{%
	\IfBooleanTF#1%
	{\dfrac{\mathrm{d} {#2}}{\mathrm{d} {#3}}}%
	{\frac{\mathrm{d} {#2}}{\mathrm{d} {#3}}}%
}
\NewDocumentCommand\ppfrac{s m m}{%
	\IfBooleanTF#1%
	{\dfrac{\partial {#2}}{\partial {#3}}}%
	{\frac{\partial {#2}}{\partial {#3}}}%
}
\newcommand{\setgiven}{:}
\providecommand\given{}
\DeclarePairedDelimiterX\Set[2]\{\}{%
	\if#1:%
		\renewcommand\given{\SetSymbol{:}}%
	\else%
		\renewcommand\given{\SetSymbol[\delimsize]{#1}}%
	\fi%
#2
}
\NewDocumentCommand\set{s O{\setgiven} m}{%
	\IfBooleanTF#1%
	{\Set*{#2}{#3}}%
	{\Set{#2}{#3}}%
}
\NewDocumentCommand{\evalat}{ s O{\big} m e{_^} }{%
\IfBooleanTF{#1}%
{\left. #3 \right|}{#3#2|}%
\IfValueT{#4}{_{#4}}%
\IfValueT{#5}{^{#5}}%
}
\providecommand\given{}
\DeclarePairedDelimiterXPP\cprob[1]{}(){}{
\renewcommand\given{\nonscript\,\delimsize\vert\allowbreak\nonscript\,\mathopen{}}%
\DeclarePairedDelimiterXPP\cexp[1]{}[]{}{
\renewcommand\given{\nonscript\,\delimsize\vert\allowbreak\nonscript\,\mathopen{}}%
#1%
}
\DeclareDocumentCommand \P { s e{_^} d() g } {%
	\mathbb{P}%
	\IfBooleanTF{#1}%
		{
			\IfValueT{#2}{_{#2}}%
			\IfValueT{#3}{^{#3}}%
			\IfValueTF{#5}{\cprob{#4 \given #5}}{\IfValueT{#4}{\cprob{#4}}}%
		}%
		{
			\IfValueT{#2}{_{#2}}%
			\IfValueT{#3}{^{#3}}%
			\IfValueTF{#5}{\cprob*{#4 \given #5}}{\IfValueT{#4}{\cprob*{#4}}}%
		}%
}
\DeclareDocumentCommand \E { s e{_^} o g } {%
	\mathbb{E}%
	\IfBooleanTF{#1}%
		{
			\IfValueT{#2}{_{#2}}%
			\IfValueT{#3}{^{#3}}%
			\IfValueTF{#5}{\cexp{#4 \given #5}}{\IfValueT{#4}{\cexp{#4}}}%
		}%
		{
			\IfValueT{#2}{_{#2}}%
			\IfValueT{#3}{^{#3}}%
			\IfValueTF{#5}{\cexp*{#4 \given #5}}{\IfValueT{#4}{\cexp*{#4}}}%
		}%
}
\DeclareDocumentCommand \Var { s e{_^} d() g } {%
	\var%
	\IfBooleanTF{#1}%
		{
			\IfValueT{#2}{_{#2}}%
			\IfValueT{#3}{^{#3}}%
			\IfValueTF{#5}{\cprob{#4 \given #5}}{\IfValueT{#4}{\cprob{#4}}}%
		}%
		{
			\IfValueT{#2}{_{#2}}%
			\IfValueT{#3}{^{#3}}%
			\IfValueTF{#5}{\cprob*{#4 \given #5}}{\IfValueT{#4}{\cprob*{#4}}}%
		}%
}
\DeclareDocumentCommand \Cov { s e{_^} d() g } {%
	\cov%
	\IfBooleanTF{#1}%
		{
			\IfValueT{#2}{_{#2}}%
			\IfValueT{#3}{^{#3}}%
			\IfValueTF{#5}{\cprob{#4 \given #5}}{\IfValueT{#4}{\cprob{#4}}}%
		}%
		{
			\IfValueT{#2}{_{#2}}%
			\IfValueT{#3}{^{#3}}%
			\IfValueTF{#5}{\cprob*{#4 \given #5}}{\IfValueT{#4}{\cprob*{#4}}}%
		}%
}
\NewDocumentCommand \dist {m o o} {%
\mathrm{#1}\left(%
	\IfValueT{#3}{%
		\tl_if_blank:nTF{ #3 }{\cdot\, \middle|\, }{#3\, \middle|\, }%
	}
	\IfValueT{#2}{#2}%
\right)%
}
\newcommand{\Bern}[1]{\dist{Bern}[#1]}
\newcommand{\Unif}[1]{\dist{Unif}[#1]}
\NewDocumentCommand {\cbrace} {t+ D[]{black} D(){\widthof{#5}} m m } {%
	\begingroup%
		\color{#2}
		\IfBooleanTF{#1}{%
			\overbrace{#4}^%
		}{
			\underbrace{#4}_%
		}%
		{\parbox[c]{#3}{\centering\footnotesize{#5}}}%
	\endgroup%
}
\let\oldforall\forall
\renewcommand{\forall}{\oldforall \, }
\let\oldexist\exists
\renewcommand{\exists}{\oldexist \, }
\newcommand{\rankcolor}[2]{%
	\expandafter\renewcommand\csname #1\endcsname[1]{%
		\ifblank{##1}{%
			{\color{#2} \textbf{#2}}%
		}{%
			\ifmmode
				\textcolor{#2}{\bm{##1}}%
			\else%
				{\color{#2} \textbf{##1}}%
			\fi	
		}%
	}
}
\DeclareDocumentCommand{\includeCroppedPdf}{ o O{./Figures/} m }{
	\IfFileExists{#2#3-crop.pdf}{}{%
		\immediate\write18{pdfcrop #2#3.pdf #2#3-crop.pdf}}%
	\includegraphics[#1]{#2#3-crop.pdf}
}
\newcommand*{\addFileDependency}[1]{
  \typeout{(#1)}
  \@addtofilelist{#1}
  \IfFileExists{#1}{}{\typeout{No file #1.}}
}
\definecolor{gray90}{gray}{0.9}
\def\colorlist{red,blue,brown,cyan,darkgray,gray,lightgray,green,lime,magenta,olive,orange,pink,purple,teal,violet,white,yellow}
\def\startcomment{[}
	\newcommand{\createcolor}[1]{%
			\expandafter\newcommand\csname #1\endcsname[1]{{\color{#1} ##1}}%
	}
	\newcommand{\msout}[1]{\text{\color{green} \st{\ensuremath{#1}}}}
	\newcommand{\del}[1]{{\color{green}\ifmmode \msout{#1}\else\st{#1}\fi}}
	\newcommand{\createcolor}[1]{%
			\expandafter\newcommand\csname #1\endcsname[1]{%
				\noexpandarg%
				\StrChar{##1}{1}[\firstletter]%
				\if\firstletter\startcomment%
					\relax
				\else%
					##1
				\fi
			}%
	}
	\newcommand{\msout}[1]{}
	\newcommand{\del}[1]{}
\def\@tempa#1,{%
    \ifx\relax#1\relax\else
        \createcolor{#1}%
        \expandafter\@tempa
    \fi
}
\newcommand{\hhide}[1]{}
	\def\@testdef #1#2#3{%
		\def\reserved@a{#3}\expandafter \ifx \csname #1@#2\endcsname
			\reserved@a  \else
			\typeout{^^Jlabel #2 changed:^^J%
				\meaning\reserved@a^^J%
				\expandafter\meaning\csname #1@#2\endcsname^^J}%
			\@tempswatrue \fi}
\def\BibTeX{{\rm B\kern-.05em{\sc i\kern-.025em b}\kern-.08em
    T\kern-.1667em\lower.7ex\hbox{E}\kern-.125emX}}
\newacronym{GGSP}{GGSP}{generalized graph signal processing}
\newacronym{GSO}{GSO}{graph shift operator}
\newacronym{GSP}{GSP}{graph signal processing}
\newacronym{MHT}{MHT}{multiple hypothesis testing}
\newacronym{FDR}{FDR}{false discovery rate}
\newacronym{lfdr}{lfdr}{local false discovery rate}
\newacronym{pdf}{pdf}{probability density function}
\newacronym{cdf}{cdf}{cumulative distribution function}
\newacronym{KKT}{KKT}{Karush–Kuhn–Tucker}
\newacronym{WLLN}{WLLN}{weak law of large numbers}
\newacronym{CSLLN}{CSLLN}{conditional strong law of large numbers}
\newacronym{DCT}{DCT}{dominated convergence theorem}
\newacronym{BIC}{BIC}{Bayesian information criterion}
\newacronym{MLE}{MLE}{maximum likelihood estimator}
\newacronym{a.e.}{a.e.}{almost everywhere}
\newacronym{a.s.}{a.s.}{almost surely}
\newacronym{r.c.d.}{r.c.d.}{regular conditional distribution}
\newacronym{AWGN}{AWGN}{additive white Gaussian noise}
\newacronym{BH}{BH}{Benjamini-Hochberg}
\newacronym{GLM}{GLM}{generalized linear model}
\newacronym{ADMM}{ADMM}{alternating direction method of multipliers}
\newcommand{\FDR}{\mathrm{FDR}}
\newcommand{\mFDR}{\mathrm{mFDR}}
\newcommand{\pow}{\mathrm{pow}}
\newcommand{\mpow}{\mathrm{mpow}}
\newcommand{\lfdr}{\mathrm{lfdr}}
\newcommand{\Ind}{\mathbb I}
\newcommand{\sigmoid}{\mathrm{sigmoid}}
\begin{document}

\title{A Generalized Graph Signal Processing Framework for Multiple Hypothesis Testing over Networks\\
\thanks{This research/project is supported by the National Research Foundation, Singapore and Infocomm Media Development Authority under its Future Communications Research \& Development Programme.}
}


\author{
    \IEEEauthorblockN{Xingchao Jian\IEEEauthorrefmark{1}, Martin Gölz\IEEEauthorrefmark{2}, Feng Ji\IEEEauthorrefmark{1}, Wee~Peng~Tay\IEEEauthorrefmark{1}, Abdelhak M. Zoubir\IEEEauthorrefmark{2}}
    \IEEEauthorblockA{\IEEEauthorrefmark{1}School of Electrical and Electronic Engineering, Nanyang Technological University, Singapore
    }
    \IEEEauthorblockA{\IEEEauthorrefmark{2}Technische Universitat Darmstadt, Darmstadt, Germany
    }
}

\maketitle

\begin{abstract}
We consider the multiple hypothesis testing (MHT) problem over the joint domain formed by a graph and a measure space. On each sample point of this joint domain, we assign a hypothesis test and a corresponding $p$-value. The goal is to make decisions for all hypotheses simultaneously, using all available $p$-values. In practice, this problem resembles the detection problem over a sensor network during a period of time. To solve this problem, we extend the traditional two-groups model such that the prior probability of the null hypothesis and the alternative distribution of $p$-values can be inhomogeneous over the joint domain. We model the inhomogeneity via a generalized graph signal. This more flexible statistical model yields a more powerful detection strategy by leveraging the information from the joint domain.
\end{abstract}

\begin{IEEEkeywords}
Multiple hypothesis testing, generalized graph signal processing.
\end{IEEEkeywords}

\section{Introduction}

Detection is an important task in statistical signal processing. \Gls{MHT}, which aims to make decisions simultaneously for many hypotheses, has been considered since the nominal works \cite{BenHoc:J95,Sto:J02}. The detection strategy is expected to discover as many alternatives as possible, while maintaining the \gls{FDR} below a nominal level. When the hypotheses are placed on a known structure, it is expected that the detection power could be boosted by leveraging the underlying structure. There are works that address this problem when the underlying structure is a graph \cite{LiBar:J18,TanKoyPolSco:J18,PouXia:J23} or spatial domain \cite{GolZouKoi:J22}. In this paper, we consider the \gls{MHT} problem on a joint graph-measure space structure by utilizing the basic concepts in \gls{GGSP}.

\Gls{GSP} is a set of techniques that address signal processing problems on graphs, which mainly includes sampling \cite{AniGadOrt:J16, TanEldOrt:J20}, reconstruction \cite{RomMaGia:J16,KroRouEld:J22} and filtering \cite{PavGirOrt:J23, IsuGamShuSeg:J24}. In \gls{GSP}, signals are functions on the vertex set of a graph. This signal model is extended by \cite{JiTay:C18,JiTay:J19} to a more general setting where the domain of signal is the Cartesian product between the vertex set and a general measure space. We refer to this product domain as \emph{joint domain}, which resembles the joint time-vertex domain. The signal reconstruction and filtering problem for \gls{GGSP} has been addressed in \cite{JianTayEld:J23,JianTay:C23}. In this work, we consider the \gls{MHT} problem on the joint domain. We propose to parameterize the distributions of $p$-values on graphs by a generalized graph signal. This strategy is different from the existing works in that we model and estimate continuously varying \glspl{pdf} of $p$-values over the domain, instead of homogeneous or having a fixed number of choices. Therefore, this strategy is better at leveraging the domain knowledge than the existing approaches.

The contributions of this paper are as follows.

\begin{enumerate}
	\item We formulate the empirical-Bayesian model for \gls{MHT} over the joint domain, and derive the optimal oracle detection strategy.
	\item We provide a method for estimating probability densities of $p$-values varying over the joint domain. By substituting this estimate into the oracle solution, we propose a detection strategy for \gls{MHT} problem with \gls{FDR} control over the joint domain.
	\item We illustrate our method on the seismic dataset.
\end{enumerate}
The rest of this paper is organized as follows. In \cref{sect:prob_for}, we introduce the problem formulation of \gls{MHT} on the joint domain. In \cref{sect:stat_model_sol}, we describe the empirical-Bayesian model, from which we derive the oracle solution given the true distribution of $p$-values. In \cref{sect:est_lfdr}, we prove the consistency of \gls{MLE} estimator for the underlying generalized graph signal. By substituting this estimator into the oracle solution, we obtain a strategy with asymptotic \gls{FDR} control.

\emph{Notations.} We write random variables in serif font for English letters (e.g., $\sfJ$, $\sfS$ $\sfv$ and $\sft$, etc.) or teletyped font (e.g., $\tttheta$, $\tteta$, $\ttxi$, $\ttgamma$) for Greek letters. We use plain lowercase letters (e.g., $\phi$) to represent scalars. Bold lowercase letters (e.g., $\bphi$) stand for vectors and vector-valued functions. Bold upper-case letters (e.g., $\bS$) stand for matrices. We write $\convp$ for convergence in probability. For a set $\calX$, we use $\calX^M$ to denote $M$-times Cartesian product of the set $\calX$ with itself.

\section{Problem Formulation}\label{sect:prob_for}

In this section, we introduce the mathematical formulation of the \gls{MHT} problem. We first introduce the construction of the domain where the problem is based on. We consider an undirected graph $G = (\calV,\calE)$ where $\calV$ denotes the vertex set and $\calE$ denotes the edge set. We assume that $G$ does not have self-loops. Apart from the graph, we consider an additional domain $(\calT,\calA,\tau)$ which is a measure space. The domain of the problem is then $\calV\times\calT$. For simplicity, we write $\calJ$ to denote $\calV\times\calT$. Suppose $\abs{\calV}=N$. We denote the \gls{GSO} as $\bS_G$ and its corresponding graph Fourier basis as $\set{\phi_k(\cdot)\given k=1,\dots,N}$. Note that each $\phi_k(\cdot)$ is a function on $\calV$.

The \gls{MHT} problem aims to make decisions based on $p$-values observed by multiple hypotheses simultaneously. In this paper, we consider a finite sample set of $\calJ$, denoted as $\sfS := \set{(\sfv_m,\sft_m)\given m=1,\dots,M}$. Here $\sfS$ can be a random set. Each sample $(\sfv_m,\sft_m)\in\sfS$ is associated with a hypothesis and a $p$-value $\sfp_m$. The set of sample points with null hypotheses is $\sfJ_0:=\set{(\sfv_m,\sft_m)\given \text{null hypothesis is true}}$, and $\sfJ_1:=\sfS\setminus\sfJ_0$ denotes those with alternative hypotheses. Let $\bp:=(\sfp_1,\dots,\sfp_M)$ be the vector containing all $p$-values. We consider a \emph{detection strategy} as a map $\bh: [0,1]^M\to \set{0,1}^M$, so that the value of $\bh$ is a $M$-dimensional vector with $0$-$1$ entries. If the $m$-th entry of $\bh(\bp)$ is $0$ (or $1$), then it means that the hypothesis on $(\sfv_m,\sft_m)$ is identified as null (or alternative) by $\bh$. We write $\widehat{\sfJ}_0$ (or $\widehat{\sfJ}_1$) as the sample points where the associated hypotheses are identified as null (or alternative) by $\bh$. There are two measurements for the quality of a detection strategy: \gls{FDR} and power. They are defined as follows:
\begin{align}
	\FDR(\bh) &= \E[\frac{\abs{\widehat{\sfJ}_1\bigcap \sfJ_0}}{\max(\abs{\widehat{\sfJ}_1}, 1)}],\label{eq:def_fdr}\\
	\pow(\bh) &= \E[\frac{\abs{\widehat{\sfJ}_1\bigcap \sfJ_1}}{\max(\abs{\sfJ_1},1)}].\label{eq:def_pow}
\end{align}
Similar to a single hypothesis test, the target of \gls{MHT} is to find a way to design $\bh$ whose power is as large as possible while the \gls{FDR} is controlled below a nominal level.

\section{Statistical Model and Oracle Solution}\label{sect:stat_model_sol}

\subsection{Statistical Model}

In this section, we introduce the empirical-Bayesian model for the \gls{MHT} problem described in \cref{sect:prob_for}. This model extends what is known as the two-groups model in traditional \gls{MHT}, which has two main assumptions: first, each hypothesis being null or alternative is Bernoulli with the same distribution. Second, under null and alternative hypothesis, the $p$-value has different distributions. In this paper, since the hypotheses are placed on different sample points on $\calJ$, we allow the prior probability of being null and the distributions of $p$-values to vary among different points of $\calJ$. By doing so, the model has better fitness on the data than the existing models in \cite{PouXia:J23,GolZouKoi:J22,LiBar:J18,TanEldOrt:J20}.

We illustrate the hierarchical Bayes model in \cref{fig:Bayes:model}. 
\begin{figure}[!htb]
	\centering
	\includegraphics[scale = 1]{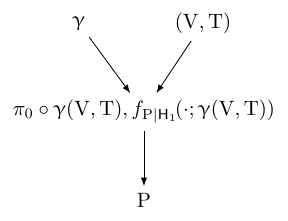}
	\caption{The scheme of the Bayesian model for \gls{MHT} in \gls{GGSP}.}
	\label{fig:Bayes:model}
\end{figure}
In this model, $\ttgamma$ is a stochastic process on $\calJ$. We use it to parameterize the varying alternative distribution of $p$-values and the probability of null hypothesis for different $(v,t)\in\calJ$. We assume that the range of $\ttgamma$ as a function of $(v,t)$ is a subset of $\calZ$.
A sample point $(\sfv, \sft)$ is drawn from a positive probability measure $\rho$ on $\calJ$, independently with 
the process $\ttgamma$. 
Let $\pi_0$ be a continuous function from $\calZ$ to $[0,1]$. Given $\ttgamma$ and $(\sfv,\sft)$, we obtain a probability $\pi_0\circ\ttgamma(\sfv,\sft)\in[0,1]$. 
We assume the following hierarchical model for generating the $p$-value: first, generate $\tttheta$ as
\begin{align*}
	\tttheta=
	\begin{cases}
		0 & \mathrm{w.p.~} \pi_0\circ\ttgamma(\sfv,\sft), \\
		1 & \mathrm{w.p.~} 1-\pi_0\circ\ttgamma(\sfv,\sft),
	\end{cases}
\end{align*}
where $\tttheta=1$ indicates that the alternative hypothesis on $(\sfv,\sft)$ is true. Then, under null and alternative hypotheses, the $p$-value $\sfp$ is generated independently from the following different distributions:
\begin{align*}
	\sfp\mid \set{\tttheta=0} &\sim f_0(\cdot;(\sfv,\sft)),\\
	\sfp\mid \set{\tttheta=1} &\sim f_1(\cdot;\ttgamma(\sfv,\sft)),
\end{align*}
where $f_1$ is from a parametric family of \glspl{pdf} $\calP^1:=\set{f_1(\cdot;\zeta)\given \zeta\in\calZ}$ on $[0,1]$ with parameter $\zeta$. Under this model, given $\ttgamma$, the conditional joint \gls{pdf} of the $p$-value $\sfp$, the indicator $\tttheta$, and the sample point $(\sfv,\sft)$ is 
\begin{align*}
	f_{\sfp, \tttheta, (\sfv,\sft)}(p,\theta,(v,t)\mid \ttgamma) 
	= \Big(\pi_0\circ\ttgamma(v,t)\Ind\set{\theta=0}f_0(p;(v,t)) \\
	+ (1-\pi_0\circ\ttgamma(v,t))\Ind\set{\theta=1}f_1(p;\ttgamma(v,t)) \Big)\rho(v,t),
\end{align*}
where $p\in(0,1]$, $\theta\in\set{0,1}$, and $(v,t)\in\calJ$. From $f_{\sfp, \tttheta, (\sfv,\sft)}$, we derive the conditional \gls{pdf} of $\sfp$ and $(\sfv,\sft)$ as
\begin{align}
	\begin{aligned}\label{eq:pdf_pvt|gam}
	f_{\sfp,(\sfv,\sft)}(p,(v,t)\mid\ttgamma) 
	= \Big(\pi_0\circ\ttgamma(v,t)f_0(p;(v,t)) \\
	+ (1-\pi_0\circ\ttgamma(v,t))f_1(p;\ttgamma(v,t)) \Big)\rho(v,t),
	\end{aligned}
\end{align}
and the conditional \gls{pdf} of $\sfp$ given $\ttgamma$ and $(\sfv,\sft)$ as
\begin{align}
	\begin{aligned}\label{eq:pdf_p|gamvt}
	f_{\mathrm{mix}}(p\mid\ttgamma(v,t))
	&= \pi_0\circ\ttgamma(v,t)f_0(p;(v,t)) \\
	&+ (1-\pi_0\circ\ttgamma(v,t))f_1(p;\ttgamma(v,t)).
	\end{aligned}
\end{align}

In other words, we parameterize the distribution of $p$-values on $\calJ$ by a \emph{random generalized graph signal} $\ttgamma$. This model combines the information from the joint domain $\calJ$ via $\ttgamma$ and the two-groups model widely adopted in the \gls{MHT} literature \cite{Efr:10,LeiFit:J18,CaoChenZhang:J22}.

In this paper, we assume that $f_0(\cdot;(v,t))$ is known and deterministic for all $(v,t)\in\calJ$. By definition of $p$-value, it can be verified that $f_0(\cdot;(v,t))$ under simple null hypothesis is $\Unif{0,1}$. In other cases when $f_0(\cdot;(v,t))$ is unknown, $f_0(\cdot;(v,t))$ may be estimated from data beforehand by controlling the null hypotheses to be true. 
For the $M$ \gls{iid} samples $(\sfv_m,\sft_m)$ in $\sfS$, we obtain $M$ independent samples $(\tttheta_m)_{m=1}^M$, and $p$-values $(\sfp_m)_{m=1}^M$.

We note that this model generalizes the two-groups model in the \gls{MHT} literature. In traditional \gls{MHT},  each $\tttheta_m$ is modeled as a Bernoulli random variable $\Bern{1 - \pi_0}$ where $\pi_0$ is a fixed value. The distributions $f_0$ and $f_1$ are also assumed to be the same for all $p$-values. In this model, by parameterizing $f_1$ and $\pi_0$ with $\ttgamma(v,t)$, we indicate that both the probability of hypotheses being null and the distribution of alternative $p$-values are inhomogeneous over $\calJ$. 

In this paper, we assume that $\pi_0\circ\ttgamma(v,t)$ and $f_1(\cdot;\ttgamma(v,t))$ are identifiable from $f_{\mathrm{mix}}(\cdot\mid\ttgamma(v,t))$ using \cref{eq:pdf_p|gamvt}. This can be guaranteed by the following assumptions:
\begin{Assumption}\label{asp:identify}\
	\begin{enumerate}[(i)]
		\item \label[condition]{cond:monot_f_0} $f_0(p;(v,t))$ is non-decreasing \gls{wrt} $p\in[0,1]$.
		\item \label[condition]{cond:monot_f_1}$f_1(\cdot;\zeta)$ is non-increasing on $(0,1]$ for all $\zeta\in\calZ$.
		\item \label[condition]{cond:f_1=0}$\min\limits_{p\in[0,1]}f_1(p;\zeta)=0$ for all $\zeta\in\calZ$, i.e., with \cref{cond:monot_f_1}, $f_1(1;\zeta)=0$ for all $\zeta\in\calZ$.
		\item \label[condition]{cond:conts_pdf}$f_0(p;(v,t))$ is continuous on $[0,1]\times\calJ$, and $f_1(p;\zeta)$ is continuous on $(0,1]\times\calZ$.
	\end{enumerate}
\end{Assumption}

In \cref{asp:identify}, \cref{cond:monot_f_0,cond:monot_f_1} indicate that $p$-value is more likely to be small under the alternative hypothesis and more likely to be large under the null hypothesis. 
The assumptions in \cref{asp:identify} are commonly assumed in the \gls{MHT} literature (cf .\ \cite[Theorem 2]{LeiFit:J18} and \cite[Section 2.1]{CaoChenZhang:J22}).
They ensure that $\pi_0\circ\ttgamma(v,t)$ and $f_1(\cdot;\ttgamma(v,t))$ are identifiable from $f_{\mathrm{mix}}(\cdot\mid\ttgamma(v,t))$, since according to \cref{cond:monot_f_0,cond:monot_f_1,cond:f_1=0}, for each $(v,t)\in\calJ$, we have  
\begin{align}
	\pi_0\circ\ttgamma(v,t) &= \frac{f_{\mathrm{mix}}(1\mid\ttgamma(v,t))}{f_0(1;(v,t))},\label{eq:pi0_inv}\\
	f_1(p;\ttgamma(v,t)) &= \ofrac{1-\pi_0\circ\ttgamma(v,t)} \Big(f_{\mathrm{mix}}(p\mid\ttgamma(v,t)) \nn
	&- \pi_0\circ\ttgamma(v,t)f_0(p;(v,t)) \Big).\label{eq:f1_inv}
\end{align}

The \gls{lfdr} is defined as 
\begin{align}\label{eq:lfdr}
	\lfdr(p; \ttgamma(v,t)) = \frac{\pi_0\circ\ttgamma(v,t)f_0(p;(v,t))}{f_{\mathrm{mix}}(p\mid\ttgamma(v,t))}.
\end{align}

The \gls{lfdr} for $\sfp_m,(\sfv_m,\sft_m)$ is the conditional probability $\P(\tttheta_m=0\given \ttgamma(\sfv_m,\sft_m),\sfp_m)$.
In \cref{sect:est_lfdr}, we estimate $f_{\mathrm{mix}}(\cdot\mid\ttgamma(\sfv_m,\sft_m))$, and thus $\pi_0\circ\ttgamma(\sfv_m,\sft_m)$ and $f_1(\cdot;\ttgamma(\sfv_m,\sft_m))$. 

\subsection{Oracle Solution}

In this subsection, we show that, if $\pi_0\circ\ttgamma(\sfv_m,\sft_m)$ and $f_1(\cdot;\ttgamma(\sfv_m,\sft_m))$ are known for $m=1,\dots,M$, then the optimal detection strategy is thresholding the \gls{lfdr}. To see this, we introduce two complementary definitions, marginal \gls{FDR} and marginal power, defined as 
\begin{align}
	\mFDR(\bh;\ttgamma,\sfS) &= \frac{\E[\abs{\widehat{\sfJ}_1\bigcap \sfJ_0} \given \ttgamma(\sfS)]}{\E[\abs{\widehat{\sfJ}_1}\given\ttgamma(\sfS)]},\label{eq:def_mfdr}\\
	\mpow(\bh;\ttgamma,\sfS) &= \frac{\E[\abs{\widehat{\sfJ}_1\bigcap \sfJ_1} \given \ttgamma(\sfS)]}{\E[\abs{\sfJ_1}\given\ttgamma(\sfS)]}.\label{eq:def_mpow}
\end{align}

In classical single hypothesis test, the null hypothesis is rejected if the $p$-value is lower than the nominal significance level. In this paper, for each individual test on $(\sfv_m,\sft_m)$, we use the following thresholding rule:
\begin{align}\label{eq:thres_rule}
	(\bh(\bsfp))_m = 
	\begin{cases}
		1 & \text{if}\ \sfp_m \leq s_m, \\
		0 & \text{otherwise}.
	\end{cases}
\end{align}
The difference is that, in \cref{eq:thres_rule}, the threshold of $p$-value is no longer the significance level but chosen by certain criteria that we will soon introduce. Under this rule, designing the detection strategy $\bh$ amounts to designing $\bs:=(s_1,\dots,s_M)$. Since the rejection rule $\bh$ is fully determined by $\bs$, we denote $\mFDR(\bh; \ttgamma,\sfS)$ and $\mpow(\bh; \ttgamma,\sfS)$ by $\mFDR(\bs; \ttgamma,\sfS)$ and $\mpow(\bs; \ttgamma,\sfS)$ under \cref{eq:thres_rule}. We thus consider the following problem:
\begin{align}\label{eq:opt_mpow}
	\begin{aligned}
		\max_{\bs\in[0,1]^M} \mpow(\bs; \ttgamma,\sfS) \\
		\ST \mFDR(\bs; \ttgamma,\sfS) \leq \alpha.
	\end{aligned}
\end{align}
In \cref{thm:opt_thres}, we show that the optimal solution to problem \cref{eq:opt_mpow} is a level surface of \gls{lfdr}. This theorem is a slight generalization of \cite[Theorem 2]{LeiFit:J18}.
\begin{Theorem}\label{thm:opt_thres}
	Suppose \cref{asp:identify} holds and there exists $p\in(0,1)$ and $(v,t)\in\sfS$ such that $\lfdr(p;\ttgamma(v,t))< \alpha$.
	Then the optimal solution $\bs^*=(s_1^*,\dots,s_M^*)$ to problem \cref{eq:opt_mpow} satisfies
	\begin{align*}
		\lfdr(s_m^*;\ttgamma(\sfv_m,\sft_m)) = \eta,\ m=1,\dots,M,
	\end{align*}
	where $\eta$ is a constant independent of $m$.
\end{Theorem}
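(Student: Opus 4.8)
Everything below is conditioned on $\ttgamma(\sfS)$. The plan is to put \cref{eq:opt_mpow} into a coordinate-separable form and then run a Lagrangian sufficiency argument whose optimizer turns out to be, automatically, a common level set of the \gls{lfdr}. First I would compute the four conditional expectations in \cref{eq:def_mfdr,eq:def_mpow} in closed form under the thresholding rule \cref{eq:thres_rule}: since the pairs $(\sfp_m,\tttheta_m)$ are conditionally independent given $\ttgamma(\sfS)$ with laws governed by \cref{eq:pdf_p|gamvt}, one gets $\E[\abs{\widehat{\sfJ}_1\cap\sfJ_0}\given\ttgamma(\sfS)]=\sum_m\pi_0\circ\ttgamma(\sfv_m,\sft_m)\,F_0(s_m;(\sfv_m,\sft_m))$, $\E[\abs{\widehat{\sfJ}_1}\given\ttgamma(\sfS)]=\sum_m F_{\mathrm{mix}}(s_m\mid\ttgamma(\sfv_m,\sft_m))$, $\E[\abs{\widehat{\sfJ}_1\cap\sfJ_1}\given\ttgamma(\sfS)]=\sum_m(1-\pi_0\circ\ttgamma(\sfv_m,\sft_m))\,F_1(s_m;\ttgamma(\sfv_m,\sft_m))$ and $\E[\abs{\sfJ_1}\given\ttgamma(\sfS)]=\sum_m(1-\pi_0\circ\ttgamma(\sfv_m,\sft_m))$, where $F_0,F_1,F_{\mathrm{mix}}$ are the \glspl{cdf} of $f_0,f_1,f_{\mathrm{mix}}$. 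The $\mpow$ denominator is $\bs$-independent, so \cref{eq:opt_mpow} is equivalent to maximizing $\Phi(\bs):=\sum_m(1-\pi_0\circ\ttgamma(\sfv_m,\sft_m))F_1(s_m;\ttgamma(\sfv_m,\sft_m))$; and, after clearing the nonnegative $\mFDR$ denominator (with the convention that $\mFDR=0$ when there are conditionally no rejections), the constraint becomes the equivalent inequality $g(\bs):=\sum_m\big(\pi_0\circ\ttgamma(\sfv_m,\sft_m)F_0(s_m;(\sfv_m,\sft_m))-\alpha F_{\mathrm{mix}}(s_m\mid\ttgamma(\sfv_m,\sft_m))\big)\le 0$. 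Existence of a maximizer follows from compactness of $[0,1]^M$ and continuity ensured by \cref{cond:conts_pdf}.

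The second ingredient is the monotonicity that makes level sets of the \gls{lfdr} the natural objects here. Writing $\lfdr(p;\ttgamma(v,t))=\big(1+\tfrac{(1-\pi_0\circ\ttgamma(v,t))f_1(p;\ttgamma(v,t))}{\pi_0\circ\ttgamma(v,t)f_0(p;(v,t))}\big)^{-1}$, \cref{cond:monot_f_0,cond:monot_f_1} show that $p\mapsto\lfdr(p;\ttgamma(v,t))$ is non-decreasing, so every threshold region $(0,s_m]$ is a sub-level set of the \gls{lfdr} along coordinate $m$. Together with the pointwise identities $\pi_0\circ\ttgamma\cdot f_0=\lfdr\cdot f_{\mathrm{mix}}$ and $(1-\pi_0\circ\ttgamma)f_1=(1-\lfdr)f_{\mathrm{mix}}$ (both read off \cref{eq:pi0_inv,eq:f1_inv}), this lets the $s_m$-derivatives of $\Phi$ and of $g$ be written as $f_{\mathrm{mix}}(s_m\mid\ttgamma(\sfv_m,\sft_m))$ times $(1-\lfdr)$ and $(\lfdr-\alpha)$, respectively.

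The third step is a Lagrangian sufficiency argument. For $\lambda\ge 0$ put $\eta(\lambda):=\tfrac{1+\lambda\alpha}{1+\lambda}\in(\alpha,1]$ and $L_\lambda(\bs):=\Phi(\bs)-\lambda g(\bs)$. By the previous step $L_\lambda$ is coordinate-separable and its $m$-th summand has $s_m$-derivative proportional to $f_{\mathrm{mix}}(p\mid\ttgamma(\sfv_m,\sft_m))\big[(1+\lambda\alpha)-(1+\lambda)\lfdr(p;\ttgamma(\sfv_m,\sft_m))\big]$, which is $\ge 0$ exactly on $\{\lfdr\le\eta(\lambda)\}$; hence $L_\lambda$ is maximized at $s_m(\lambda):=\sup\set{p:\lfdr(p;\ttgamma(\sfv_m,\sft_m))\le\eta(\lambda)}$, at which, by \cref{cond:conts_pdf}, $\lfdr(s_m(\lambda);\ttgamma(\sfv_m,\sft_m))=\eta(\lambda)$ whenever that point is interior. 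Then I would choose $\lambda^\star$: because the hypothesis of \cref{thm:opt_thres} supplies a coordinate on which $\lfdr<\alpha$ on a set of positive $f_{\mathrm{mix}}$-measure, $g(\bs(\lambda))<0$ as $\lambda\to\infty$ (i.e.\ $\eta\to\alpha$), while at $\lambda=0$ one has $\bs(0)=\bone$; if $g(\bone)\le 0$ take $\lambda^\star=0$, otherwise an intermediate-value argument in $\lambda$ yields $\lambda^\star\in(0,\infty)$ with $g(\bs(\lambda^\star))=0$. In every case $\lambda^\star g(\bs(\lambda^\star))=0$, so for each feasible $\bs$, $\Phi(\bs)\le L_{\lambda^\star}(\bs)\le L_{\lambda^\star}(\bs(\lambda^\star))=\Phi(\bs(\lambda^\star))$, which proves $\bs(\lambda^\star)$ optimal with $\eta:=\eta(\lambda^\star)$. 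Running the same inequalities backwards shows any optimal $\bs^\star$ also maximizes $L_{\lambda^\star}$ coordinate-wise, and monotonicity and continuity of $\lfdr$ then pin down $\lfdr(s_m^\star;\ttgamma(\sfv_m,\sft_m))=\eta$ for all $m$.

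The main obstacle is the degenerate behaviour suppressed above. Three points need care: (i) if $\lfdr(p;\ttgamma(\sfv_m,\sft_m))>\eta$ for every $p\in(0,1]$ then $s_m^\star=0$ and the asserted equality holds only in the limiting sense $\lfdr(0^+;\cdot)$, unless a mild regularity condition on the alternative densities excludes this; (ii) where $\lfdr$ is flat, or where $f_{\mathrm{mix}}$ vanishes and $\lfdr$ is ill-defined, $\lambda\mapsto g(\bs(\lambda))$ may jump, so the intermediate-value step must be patched by letting $s_m^\star$ range within a flat level set of $\lfdr$ to realize $g=0$ exactly --- the randomized-threshold device of \cite[Theorem~2]{LeiFit:J18}; and (iii) one must check that the constraint is active at every optimum when $\eta<1$, which follows from $F_1(\cdot;\zeta)$ being non-decreasing together with \cref{cond:monot_f_1,cond:f_1=0}. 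I expect (ii)--(iii) to carry the real weight; converting the per-coordinate stationarity condition into ``$\lfdr=\eta$'' is a one-line computation once the problem is in the separable form of the first step.
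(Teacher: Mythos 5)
Your Lagrangian--sufficiency argument is correct and is essentially the route the paper takes: the proof it defers to \cite[Appendix B]{JianGolJi:J24} is an adaptation of the proof of \cite[Theorem 3]{LeiFit:J18}, which is exactly the separable Neyman--Pearson computation you carry out --- reducing \cref{eq:opt_mpow} to maximizing $\sum_m(1-\pi_0\circ\ttgamma(\sfv_m,\sft_m))F_1(s_m;\ttgamma(\sfv_m,\sft_m))$ subject to $\sum_m\big(\pi_0\circ\ttgamma(\sfv_m,\sft_m)F_0(s_m;(\sfv_m,\sft_m))-\alpha F_{\mathrm{mix}}(s_m\mid\ttgamma(\sfv_m,\sft_m))\big)\le 0$, and identifying the multiplier with the common level $\eta=(1+\lambda\alpha)/(1+\lambda)$. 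The degenerate situations you flag (flat or unattained $\lfdr$ levels, vanishing $f_{\mathrm{mix}}$, densities $f_1$ continuous only on $(0,1]$) are precisely where the paper's ``slight modification'' of \cite{LeiFit:J18} lives, so your accounting of them is apt.
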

\begin{proof}
	The proof of this theorem is a slight modification of the proof of \cite[Theorem 3]{LeiFit:J18}. We refer the reader to \cite[Appendix B]{JianGolJi:J24} for the proof details.
\end{proof}

Different from \cite[Theorem 2]{LeiFit:J18} and \cite[Proposition 2.1]{CaoChenZhang:J22}, in \cref{thm:opt_thres}, we do not require that $f_1(\cdot;\ttgamma(v,t))$ is continuous on the \emph{closed interval} $[0,1]$. This means that we allow for unbounded $f_1$ such as the Beta distribution. \cref{thm:opt_thres} implies that the optimal threshold for $\set{\sfp_m}$ corresponds to a level set of \gls{lfdr}. Therefore, the rejection rule becomes  
\begin{align}\label{eq:orac_thres}
	(\bh(\bsfp))_m=
	\begin{cases}
		1 & \lfdr(\sfp_m; \ttgamma(\sfv_m,\sft_m)) \leq \eta, \\
		0 & \lfdr(\sfp_m; \ttgamma(\sfv_m,\sft_m)) > \eta.
	\end{cases}
\end{align}
To choose $\eta$, we define the following quantities:
\begin{align}
	\sfd_{1,M}(\eta) &:= \ofrac{M}\sum_{m=1}^M\lfdr(\sfp_m;\ttgamma(\sfv_m,\sft_m))\cdot \nn
	&\Ind\set{\lfdr(\sfp_m;\ttgamma(\sfv_m,\sft_m))\leq \eta},\label{eq:def:D_1M}\\
	\sfd'_{1,M}(\eta) &:=\ofrac{M}\sum_{m=1}^M (1-\tttheta_m)\Ind\set{\lfdr(\sfp_m;\ttgamma(\sfv_m,\sft_m))\leq \eta}\label{eq:def:V_M}\\
	\sfd_{0,M}(\eta) &:= \ofrac{M}\sum_{m=1}^M\Ind\set{\lfdr(\sfp_m;\ttgamma(\sfv_m,\sft_m))\leq \eta}.\label{eq:def:D_0M}
\end{align}

By conditioning on $\sfp_m$ and noting that $\P(\tttheta_m=0\given \sfp_m,\ttgamma(\sfv_m,\sft_m)) = \lfdr(\sfp_m;\ttgamma(\sfv_m,\sft_m))$, we see that $\sfd_{1,M}(\eta)$ estimates the proportion of false rejections among all tests:
\begin{align*}
	&\E[(1-\tttheta_m)\Ind\set{\lfdr(\sfp_m;\ttgamma(\sfv_m,\sft_m))\leq \eta}\given\ttgamma,\sfS]\\
	&= \E[\lfdr(\sfp_m;\ttgamma(\sfv_m,\sft_m))\Ind\set{\lfdr(\sfp_m;\ttgamma(\sfv_m,\sft_m))\leq \eta}\given\ttgamma,\sfS].
\end{align*}
Therefore, by taking expectation over $\sfS$, we see that $\E[\sfd_{1,M}(\eta)\given\ttgamma]=\E[\abs{\widehat{\sfJ}_1\bigcap\sfJ_0}\given\ttgamma]=\E[\sfd'_{1,M}(\eta)\given\ttgamma]$. Besides, it can be shown that $\E[\sfd_{0,M}(\eta)\given\ttgamma]=\E[\abs{\widehat{\sfJ}_1}\given\ttgamma]$. Therefore, the quantity $\sfr_M(\eta):=\dfrac{\sfd_{1,M}(\eta)}{\sfd_{0,M}(\eta)}$
approximates $\mFDR(\bh;\ttgamma,\sfS)$ in \cref{eq:def_mfdr}. On the other hand, note that $\mpow(\bh;\ttgamma,\sfS)$ in \cref{eq:def_mpow} increases with $\eta$. Therefore, we choose the optimal $\eta$ (denoted by $\tteta_M$) in the following way:
\begin{align}\label{eq:orac_sol}
	\tteta_M:=\sup\set{\eta\given \sfr_M(\eta)\leq\alpha}.
\end{align}
Note that in practice we usually do not have access to the ground truths $\pi_0\circ\ttgamma(\sfv_m,\sft_m)$ and $f_1(\cdot;\ttgamma(\sfv_m,\sft_m))$, hence the \gls{lfdr} is unknown. Therefore, we call \cref{eq:orac_thres,eq:orac_sol} the \emph{oracle} solution.

\section{Estimation of Local False Discovery Rate}\label{sect:est_lfdr}

In this section, we state our method of substituting the \gls{MLE} estimator to the oracle solution. Specifically, we first estimate $\ttgamma$, and thus obtain an estimate of $\sfr_M$. To estimate $\ttgamma$, we consider it as a band-limited generalized graph signal.
We make the following assumptions to ensure the consistency of the \gls{MLE} estimation of $\ttgamma$.
\begin{Assumption}\label{asp:bandlimit}\
	\begin{enumerate}[(i)]
		\item The signal $\ttgamma$ is bandlimited, i.e., for all $(v,t)\in\calJ$,
		\begin{align}\label{eq:bl_para}
			\ttgamma(v,t) = \sum_{k_1=1}^{K_1}\sum_{k_2=1}^{K_2} \ttxi_{k_1,k_2} \cdot\phi_{k_1}(v)\psi_{k_2}(t),
		\end{align}
		where $K_1, K_2$ are positive integers, $\set{\phi_k(\cdot)\given k=1,\dots,N}$ is the graph Fourier basis and $\set{\psi_k(\cdot)\given k\in\bbN}$ is a set of orthonormal basis of $L^2(\calT)$. The coefficient matrix $\bttXi:=(\ttxi_{k_1,k_2})\in\Real^{K_1\times K_2}$ is a random matrix and takes values in a convex and compact set $\calK\subset\Real^{K_1\times K_2}$. Under this assumption, we write $\ttgamma(v,t)$ as $\gamma(v,t;\bttXi)$ to highlight the relationship \cref{eq:bl_para}.
		
		\item \label[condition]{cond:cont_basis} The function $\phi_{k_1}(v)\psi_{k_2}(t)$ is continuous in $(v,t)\in\calJ$ for all $1\leq k_1\leq K_1,1\leq k_2\leq K_2$.
		
		\item \label[condition]{cond:identi_pvt} For any distinct $\bXi\neq\bXi'$, $f_{\mathrm{mix}}(p\mid\gamma(v,t;\bXi))\neq f_{\mathrm{mix}}(p\mid\gamma(v,t;\bXi'))$ on a set in $(0,1]\times\calJ$ with positive measure. We denote the distribution and expectation of $\sfp,(\sfv,\sft)$ under $\bXi$ by $\P_{\bXi}$ and $\E_\bXi$.
		
		\item For all $\bXi$,
		\begin{align}
			\E_{\bXi}[\ln\norm*{\dfrac{f_{\mathrm{mix}}(\sfp\mid\gamma(\sfv,\sft;\bXi'))}{f_{\mathrm{mix}}(\sfp\mid\gamma(\sfv,\sft;\bXi))}}_\infty]<\infty,
		\end{align}
		where the sup norm is taken \gls{wrt} $\bXi'$, and the expectation is taken \gls{wrt} $\sfp_m,(\sfv_m,\sft_m)$.
	\end{enumerate}
\end{Assumption}
Under \cref{asp:bandlimit}, we know that $\ttgamma$ is parameterized by the Fourier coefficients $\bttXi$. We use the \gls{MLE} method to estimate them.
Note that $\rho(\sfv_m,\sft_m)$ does not depend on $\bXi$, so the \gls{MLE} $\widehat{\bttXi}$ can be obtained by 
\begin{align}\label{eq:MLE_prob}
	\widehat{\bttXi} &= \argmax_{\bXi\in\calK} \sum_{m=1}^M \ln f_{\mathrm{mix}}(\sfp_m\mid\gamma(\sfv_m,\sft_m;\bXi))+ \ln\rho(\sfv_m,\sft_m) \nn
	&= \argmax_{\bXi\in\calK} \sum_{m=1}^M \ln f_{\mathrm{mix}}(\sfp_m\mid\gamma(\sfv_m,\sft_m;\bXi)).
\end{align}
By the consistency of \glspl{MLE}, $\widehat{\bttXi}$ converges to $\bttXi$ in probability as $M\to\infty$, hence $\widehat{\ttgamma}(v,t) = \sum_{k_1=1}^{K_1}\sum_{k_2=1}^{K_2} \widehat{\ttxi}_{k_1,k_2} \cdot\phi_{k_1}(v)\psi_{k_2}(t)$
converges to $\ttgamma(v,t)$ in probability. We formally state this in the following theorem:

\begin{Theorem}\label{thm:unif_conv_para}
	Under \cref{asp:bandlimit}, we have $\widehat{\bttXi}\convp\bttXi$ as the number of samples $M\to\infty$ and
	\begin{align}\label{eq:det_consis_mle}
		\sup_{(v,t)\in\calJ}\abs{\widehat{\ttgamma}(v,t) - \ttgamma(v,t)}\convp0,
	\end{align}
	under the probability measure conditioned on $\bttXi$.
\end{Theorem}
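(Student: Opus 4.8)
\emph{Proof proposal.} The plan is to obtain \cref{eq:det_consis_mle} from a classical consistency theorem for the \gls{MLE} in an \gls{iid} model, transported through the affine parametrization \cref{eq:bl_para}. Since the claim is made under the law conditioned on $\bttXi$, I would fix a realization $\bttXi=\bXi$ and argue under $\P_{\bXi}$, under which $(\sfp_m,(\sfv_m,\sft_m))_{m=1}^M$ are \gls{iid} with density $f_{\mathrm{mix}}(p\mid\gamma(v,t;\bXi))\rho(v,t)$ on $(0,1]\times\calJ$; a conclusion established for every such $\bXi$ then holds conditionally on $\bttXi$. Because $\rho$ carries no parameter, the objective in \cref{eq:MLE_prob} reduces to $\sum_{m=1}^M\ell_m(\bXi')$ with $\ell_m(\bXi'):=\ln f_{\mathrm{mix}}(\sfp_m\mid\gamma(\sfv_m,\sft_m;\bXi'))$, and since $\sfp_m$ is $\P_{\bXi}$-a.s.\ in the support of $f_{\mathrm{mix}}(\cdot\mid\gamma(\sfv_m,\sft_m;\bXi))$, the value $\ell_m(\bXi)$ is a.s.\ finite.

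I would then set up the population criterion $L(\bXi'):=\E_{\bXi}[\ell_1(\bXi')]$. The Gibbs inequality gives $L(\bXi)-L(\bXi')=\E_{\bXi}[\ln(f_{\mathrm{mix}}(\sfp\mid\gamma(\sfv,\sft;\bXi))/f_{\mathrm{mix}}(\sfp\mid\gamma(\sfv,\sft;\bXi')))]\ge0$, and this relative entropy vanishes only if the two conditional densities agree for almost every $(p,(v,t))$; because $\rho$ has full support, \cref{cond:identi_pvt} then forces $\bXi'=\bXi$, so $\bttXi$ is the unique, well-separated maximizer of $L$ over the compact set $\calK$. Next I would prove the uniform law of large numbers $\sup_{\bXi'\in\calK}\bigl|\tfrac1M\sum_{m=1}^M(\ell_m(\bXi')-\ell_m(\bXi))-(L(\bXi')-L(\bXi))\bigr|\convp0$; its ingredients are compactness of $\calK$, continuity of $\bXi'\mapsto\ell_m(\bXi')$ (because $\bXi'\mapsto\gamma(v,t;\bXi')$ is affine with the continuous coefficient functions $\phi_{k_1}(v)\psi_{k_2}(t)$ of \cref{cond:cont_basis}, while $f_0$, $f_1$, $\pi_0$ are continuous by \cref{cond:conts_pdf}), and the integrable envelope for $\sup_{\bXi'\in\calK}(\ell_1(\bXi')-\ell_1(\bXi))$ supplied by \cref{asp:bandlimit}\,(iv), which is exactly what makes the standard bracketing argument run: a finite cover of $\calK$ by shrinking balls, monotone convergence in the ball radius together with the strict inequality from the previous step, and the \gls{WLLN} on each piece. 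Combining the uniform convergence with the separated maximum yields $\widehat{\bttXi}\convp\bttXi$.

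It then remains to pass from the Fourier coefficients to $\widehat{\ttgamma}$. Writing $\widehat{\ttgamma}(v,t)-\ttgamma(v,t)=\sum_{k_1=1}^{K_1}\sum_{k_2=1}^{K_2}(\widehat{\ttxi}_{k_1,k_2}-\ttxi_{k_1,k_2})\,\phi_{k_1}(v)\psi_{k_2}(t)$ and setting $C_{k_1,k_2}:=\sup_{(v,t)\in\calJ}|\phi_{k_1}(v)\psi_{k_2}(t)|$, which is finite by the continuity in \cref{cond:cont_basis}, the triangle inequality gives $\sup_{(v,t)\in\calJ}|\widehat{\ttgamma}(v,t)-\ttgamma(v,t)|\le\sum_{k_1,k_2}C_{k_1,k_2}\,|\widehat{\ttxi}_{k_1,k_2}-\ttxi_{k_1,k_2}|$. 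This is a finite sum of terms each tending to $0$ in probability, because $\widehat{\bttXi}\convp\bttXi$ implies coordinatewise convergence; hence \cref{eq:det_consis_mle} follows.

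The step I expect to be the \emph{main obstacle} is the uniform law of large numbers inside the consistency argument. \cref{cond:monot_f_1,cond:f_1=0,cond:conts_pdf} only guarantee continuity of $f_1(\cdot;\zeta)$ on the half-open interval $(0,1]$ and explicitly allow $f_1$, and hence $\ell_m$, to be unbounded near $p=0$ (for instance a Beta alternative), so a naive continuity-plus-compactness argument fails; the logarithmic envelope \cref{asp:bandlimit}\,(iv) has to be invoked precisely to dominate the score increments and to keep $L(\bXi)-L(\bXi')$ from diverging. A secondary point to verify is that the possible vanishing of $f_{\mathrm{mix}}$ (or of $f_0$) on a set of measure zero does not spoil the strict information inequality used to isolate the maximizer.
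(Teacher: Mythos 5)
Your argument is correct and follows essentially the same route the paper indicates: it defers to a Wald-type MLE consistency proof whose ingredients (compactness of $\calK$, identifiability via \cref{cond:identi_pvt}, and the log-ratio envelope in \cref{asp:bandlimit}(iv)) are exactly the ones you assemble, followed by the same finite-sum bound through the basis functions to get the uniform statement \cref{eq:det_consis_mle}. The only point worth making explicit is that finiteness of $C_{k_1,k_2}=\sup_{(v,t)\in\calJ}\abs{\phi_{k_1}(v)\psi_{k_2}(t)}$ requires $\calJ$ to be compact (or the $\psi_{k_2}$ to be bounded), not continuity alone---the same implicit assumption the paper makes.
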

\begin{proof}
	The proof of this theorem is in \cite[Appendix C]{JianGolJi:J24}.
\end{proof}
Once we have estimated $\ttgamma(v,t)$,  we can then estimate \gls{lfdr} by replacing $\ttgamma(v,t)$ with $\gamma(v,t;\widehat{\bttXi})$ in \cref{eq:lfdr}, i.e., $\lfdr(p;\gamma(v,t;\widehat{\bttXi}))$. By replacing the true \gls{lfdr} with this estimate in $\sfd_{1,M}(\eta)$ and $\sfd_{0,M}(\eta)$, we obtain $\widehat{\sfd}_{1,M}(\eta)$ and $\widehat{\sfd}_{0,M}(\eta)$. Finally we define $\widehat{\sfr}_M(\eta):=\frac{\widehat{\sfd}_{1,M}(\eta)}{\widehat{\sfd}_{0,M}(\eta)}$. Therefore, by \cref{eq:orac_sol}, we design the rejection threshold $s_m$ such that
\begin{align*}
	\lfdr(s_m;\gamma(\sfv_m,\sft_m;\widehat{\bttXi})) = \widehat{\tteta}_M,
\end{align*}
where $\widehat{\tteta}_M:=\sup\set{\eta\given\widehat{\sfr}_M(\eta)\leq\alpha}$.
In practice, we need to choose the hyperparameters $K_1$ and $K_2$ to balance the goodness of fit and model complexity. Let $l_{K_1,K_2}^*$ be the optimal value of \cref{eq:MLE_prob}. We propose to choose the optimal pair of $(K_1,K_2)$ with the smallest \gls{BIC} for choosing these parameters:
\begin{align*}
	\mathrm{BIC} = K_1K_2\ln M - 2l_{K_1,K_2}^*.
\end{align*}
We name this detection strategy as MHT-GGSP. It can be proved that, under mild conditions, this strategy admits asymptotic \gls{FDR} control \cite[Theorem 3]{JianGolJi:J24}. 

\section{Numerical Results}

In this section, we consider the seismic signal detection task in a sensor network. The seismic event occurred at 17:32:40, 22 December 2020 UTC in New Zealand, with the origin latitude and longitude (-42.14, 171.94).\footnote{https://www.geonet.org.nz/data/access/FDSN} We focus on the $32$ stations that are either within a radius of $4$ degrees from the origin or recorded this event. The sensor network is constructed as a $3$-NN graph based on the stations' latitudes and longitudes. We are interested in a time window of 60 seconds before and 60 seconds after the event. For each second, each station tests the presence of a seismic signal using the $z$-detector \cite{Ber:80}. To compute the signal energy, we first calculate the average of the squared values of the observed waveform in a one-second time window, and then take the logarithm. We denote this quantity as $\log\mathrm{STA}$. To compute the mean and standard deviation of $\log\mathrm{STA}$, we use the period from 7200 seconds to 60 seconds before the event time as the long-term history. Finally, the $z$-detector for each second is computed by normalizing $\log\mathrm{STA}$ using the mean and standard deviation. Under the null hypothesis (no seismic signal present), the $z$-detector is assumed to follow a standard normal distribution. Therefore, a one-sided $z$-test is conducted to detect the signal. We add \gls{AWGN} with a noise energy that is 900 times the background noise. The proportion of null hypotheses is $12.8\%$. 

We compare different MHT methods on the seismic signal detection task. Our proposed method is implemented as follows: We set $f_{\mathrm{mix}}(p\mid\ttgamma(v,t)) = \sigmoid\circ\ttgamma(v,t)p^{\sigmoid\circ\ttgamma(v,t) - 1}$, where $\sigmoid(x) = \ofrac{1+e^{-x}}$. We use the graph Laplacian as the \gls{GSO}, and $\set{\psi_k}$ the trigonometric basis of $L^2[-\pi, \pi]$, i.e., the set $\set{\ofrac{\sqrt{\pi}}\cos(kt)\given k\geq 1}\cup\set{\ofrac{\sqrt{\pi}}\sin(kt)\given k\geq 1}\cup\set{\ofrac{\sqrt{2\pi}}\Ind[-\pi,\pi]}$. We suppose the sample set $\sfS$ is given by $\calV\times \set{-\pi+\frac{j}{T}\cdot2\pi\given j=0,\dots, T}$. We compare our proposed method with \gls{BH} \cite{BenHoc:J95, Efr:10}, lfdr-sMoM \cite{GolZouKoi:J22}, FDR-smoothing \cite{TanKoyPolSco:J18}, SABHA \cite{LiBar:J18}, and AdaPT \cite{LeiFit:J18}. The performance is shown in \cref{fig:det_results}.

\begin{figure}[htbp]
	\centering
	\begin{subfigure}[b]{0.49\textwidth}
		\centering
		\includegraphics[width=0.7\linewidth, trim=0.55cm 0.3cm 1.5cm 1.4cm, clip]{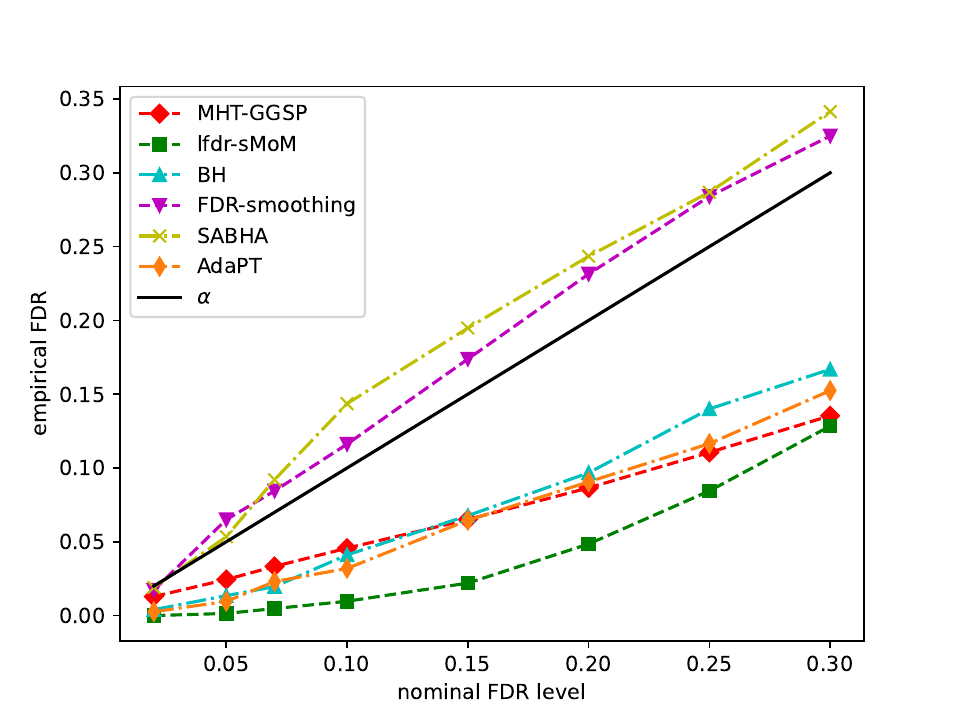}
		\caption{Empirical \gls{FDR} on seismic data}
		\label{fig:FDR_seismic}
	\end{subfigure}
	\begin{subfigure}[b]{0.49\textwidth}
		\centering
		\includegraphics[width=0.7\linewidth, trim=0.55cm 0.3cm 1.5cm 1.4cm, clip]{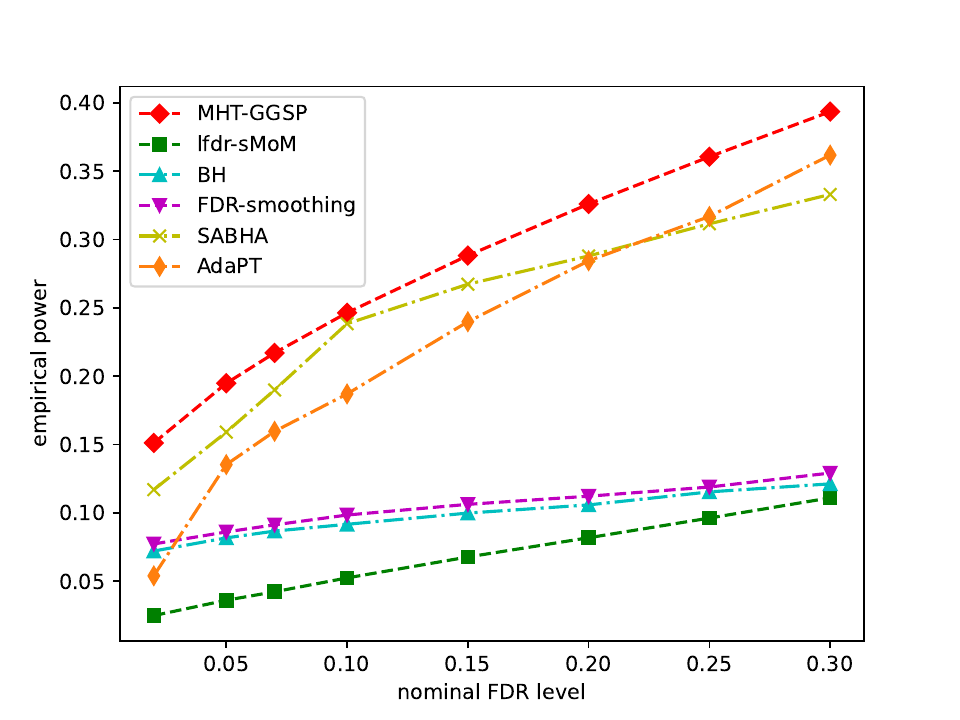}
		\caption{Empirical power on seismic data}
		\label{fig:pow_seismic}
	\end{subfigure}
	\caption{\gls{FDR} and detection power under different target \gls{FDR} levels. Each point is obtained by $20$ repetitions.}
	\label{fig:det_results}
\end{figure}

From the results, we see that FDR-smoothing and SABHA show significant \gls{FDR} inflation. We note that the FDR-smoothing assumes the same distribution of $p$-values under the alternative distribution. This assumption does not hold in general since the signal energy may be different for different vertex and time, hence $f_1(p;(v,t))$ will be different for different $(v,t)$. The SABHA method has specific constraints on the Rademacher complexity of the feasible set of reciprocal of the weights, which may not be achievable in practice, thus also observes \gls{FDR} inflation. The lfdr-sMoM method only assumes a finite number of $p$-values' distribution, hence shows a lack of detection power. Since the empirical-Bayesian model of MHT-GGSP allows the prior probability of null hypotheses and alternative distribution of $p$-values to vary on every point of $\calJ$, the resulting detection strategy observes the best power.

\bibliographystyle{IEEEtran}
\bibliography{bib/IEEEabrv,bib/StringDefinitions,bib/refs}

\end{document}